\newtheorem{theorem}{Theorem}[section]
\newtheorem{corollary}[theorem]{Corollary}
\newtheorem{proposition}[theorem]{Proposition}
\theoremstyle{definition}
\newtheorem{definition}[theorem]{Definition}
\newcommand{\restrict}{\,{\mathbin{\vert\mkern-0.3mu\grave{}}}\,}
\newcommand{\remove}[1]{}
\newcommand{\luk}{\L u\-ka\-s\-ie\-wicz}
\DeclareMathOperator{\I}{[0,1]}
\DeclareMathOperator{\cube}{[0,1]^{\it n}}
\DeclareMathOperator{\cantor}{\{0,1\}^{\it n}}
\DeclareMathOperator{\Form}{\mathsf{FORM}}
\DeclareMathOperator{\Val}{\mathsf{VAL}}
\DeclareMathOperator{\var}{\rm var}
\newcommand{\toe}{^{\langle e \rangle}}
\newcommand{\firste}{\frac{1}{e+1}}
\newcommand{\laste}{\frac{e}{e+1}}
\title[Drawing sound conclusions from unsound premises]
{Drawing sound conclusions from  unsound
premises}
\author[D. Mundici]{Daniele Mundici$^\dag$}
\address[D. Mundici]{Department of Mathematics ``Ulisse Dini'' \\
University of Florence \\
viale Morgagni 67/A \\
50134 Florence \\
Italy}
\email{mundici@math.unifi.it }
\author[C. Picardi]{Claudia Picardi $^\ddag$}
\address[C. Picardi]{Department of Computer Science,  University of Turin \\
  Corso Svizzera 185,  Turin \\
Italy }
\email{claudia.picardi@di.unito.it }
\keywords{Reasoning under uncertainty,
\luk\ calculus, boolean logic,
approximate reasoning,
stable consequence, unreliable premises,
polynomial  time reduction,
NP-complete}
  \subjclass[2000]{Primary:  68T37
Secondary:    03B50, 03B70, 03B80.}
  \date{\today}
\begin{document}

\begin{abstract}
Given sets $\Phi_{1}=\{\phi_{11},\dots,\phi_{1u(1)}\},
\ldots,\Phi_{z}=\{\phi_{z1},\dots,\phi_{zu(z)}\}$
of boolean  formulas,  a
formula $\omega$ follows from the conjunction 
$\bigwedge\Phi_i=
\bigwedge  \phi_{ij}$  iff
$\neg \omega\wedge \bigwedge_{i=1}^z \Phi_i$ is unsatisfiable.
Now assume that,  given integers $0\leq e_i < u(i)$,
 we must check if
$\neg \omega\wedge \bigwedge_{i=1}^z \Phi'_i$
remains  unsatisfiable, where    $\Phi'_i\subseteq
\Phi_i$ is obtained by deleting
$\,\,e_{i}$ arbitrarily chosen
formulas of $\Phi_i$, for  each
$i=1,\ldots,z.$
Intuitively,  does $\omega$
{\it stably}  follow,
after removing $e_i$  random formulas from
each  $\Phi_i$?
We construct a quadratic
  reduction   of this problem
to the consequence problem in infinite-valued
\luk\ logic \L$_\infty$.
% in such a way that
% for
% every instance  $I=(\Phi_{1},\ldots,\Phi_{z}, \omega; e_1,\ldots,e_z)$
%of the problem,  the length  of
%$\rho(I)$ is proportional to the length of   $I$.
In this way we obtain
a self-contained proof that  the \L$_\infty$-consequence problem
  is coNP-complete.
    \end{abstract}

\maketitle

%%%%%%%%%%%%%%%%%%%%%%%%%%%%%%%%%%%%%%%%%%%%%%
\section{Foreword}
%%%%%%%%%%%%%%%%%%%%%%%%%%%%%%%%%%%%%%%%%%%%%%

Throughout, boolean formulas
are  strings on the alphabet
$  \{ X\,, |\, , \,\neg\,,\,
  \wedge\,,\, \vee, \,  )\, , \, (  \,\, \},$
  as given by the usual syntax of propositional
  logic.
  Strings of the form
$X|, X||,  \ldots$    {\rm (for\;
short} $X_1, X_2, \ldots),$
are called {\it variables}.

\medskip
The {\it  Stable Consequence problem}
% {\rm \stu} $\subseteq {\mathcal A}^{*}$,
is as follows:

\vspace{0.1cm}
\noindent INSTANCE: A finite list
$\Phi_1,\ldots, \Phi_k$  of finite sets $\Phi_i$ of
boolean formulas,  and for each
$i=1,\ldots,k$  an integer
$0 \leq e_i <
{\rm card}(\Phi_i)
=
\mbox{ number of elements of }
\Phi_i.$

\smallskip
\noindent QUESTION:
Letting for each  $i=1,\ldots,k$ the
set $\Phi'_i\subseteq \Phi_i$ be
obtained by arbitrarily deleting $e_i$ formulas
of   $\Phi_i$,  is  the
  conjunction of all
formulas in $\Phi'_1 \cup \ldots \cup \Phi'_k$
  unsatisfiable ?

\smallskip
The  problem introduced in the abstract is
the special case of  the Stable Consequence problem
for $\Phi_{1}=\{\neg\omega\}$ and $e_1=0.$
The complementary 
  problem also generalizes the
decision version of the Maximum Satisfiability
problem.
Its significance
will be discussed in Section \ref{section:remarks}.

\smallskip
  In Section \ref{section:final}
  (Theorem \ref{theorem:final} and Corollary \ref{corollary:final})
   we construct a polynomial time reduction  $\rho$
of  the  Stable Consequence problem
  to the consequence problem in
  \luk\ infinite-valued logic \L$_\infty$.
Specifically,  every instance
  $$I=(\Phi_1,\ldots, \Phi_k; \, e_1,\ldots,e_k)$$
  of   the  Stable Consequence problem
  is transformed by $\rho$
   into a pair  $\rho(I)=
  (\theta,\phi)$ of  \L$_\infty$-formulas
  in such a way that $I$ belongs to the
  Stable Consequence problem
   iff $\phi$ is a consequence of $\theta$
  in \L$_\infty$. All preliminary material
  on \L$_\infty$-consequence is collected in
  Section 2.  Building on Sections 3 and 4, 
  Proposition \ref{proposition:odot} 
explains how the numerical parameters  $e_i$
are   incorporated
into formulas of \L$_\infty.$

Letting $v_I$ be the number of distinct variables in
$I$, and $|I|$ be the {\it  length} (i.e., the number
of occurrences of symbols)  of $I$, it turns out that
$$
|\rho(I)| < c\cdot v_I\cdot |I|,
$$
for some constant $c$ independent of $I$.
   Further,   $I$  and  $\rho(I)$  have the same variables.
A self-contained proof of
the  coNP-completeness of the consequence
  problem in  \L$_\infty$ in finally obtained in
 Corollary \ref{corollary:remake}.

This
strengthens \cite[Theorem 9.3.4]{cigdotmun},
as well as    \cite[Theorem 18.3]{newmvbook},
and solves Problem 5.3 in \cite{jer}.

  We refer to \cite[\S 4]{cigdotmun}  for background
  on \luk\ propositional logic \L$_\infty,$ and to
  \cite[\S 7]{macyou}  for polynomial time reducibility
  and NP-completeness.

%%%%%%%%%%%%%%%%%%%%%%%%%%%%%%%%%%%%%%%%%%%%%%%%%%%%%%
\section{Consequence in infinite-valued \luk\ logic}
%%%%%%%%%%%%%%%%%%%%%%%%%%%%%%%%%%%%%%%%%%%%%%%%%%%%%

To efficiently write down     \L$_\infty$-formulas
it will be convenient to
  use the richer alphabet
$  \{\, X, |, \neg,  \odot,
\oplus, \wedge, \vee,  ) ,  (  \; \}.$
The symbols $\neg, \odot, \oplus$
are called the {\it negation, conjunction,
{\rm and} disjunction} connective,
respectively.  We call $ \wedge$ and $ \vee$
the {\it idempotent} conjunction
and disjunction.
As shown in \cite[(1.2),  1.1.5]{cigdotmun}, the connective
$\odot$, as well as
the idempotent connectives are definable
in terms of $\neg$ and $\oplus,$ in the  sense of
  (\ref{equation:valuation1})-(\ref{equation:valuation3})
    below.
Following \cite[(4.1)]{cigdotmun},  we write
$\alpha\to\beta$  as an abbreviation of  $\beta\oplus \neg \alpha$.
Further,   $\alpha\leftrightarrow \beta$  stands for
$(\alpha\to \beta)\odot(\beta\to \alpha).$

To increase readability
we  assume that the negation connective  $\neg$ 
  is more binding
than $\odot$, and the latter is
more binding than $\oplus$; the idempotent
connectives $\vee$ and $\wedge$ are less
binding than any other connective.

For each
  $n=1,2,\ldots,$ we let ${{\mathsf{FORM}}}_{n}$
denote the set of formulas
$\psi(X_{1},\ldots,X_{n})$ whose variables are contained in the set
$\{X_{1},\ldots,X_{n}\}$.
More generally, for any   set $\mathcal X$ of variables,
$\mathsf{FORM}_{\mathcal X}$
denotes the set of formulas whose variables
are contained in  $\mathcal X.$
For each formula  $\phi$  we let  ${\rm var}(\phi)$ be the set of
variables occurring in $\phi$.

For any  formula $\phi\in {{\mathsf{FORM}}}_{n}$ and
integer $k = 1,2,\ldots,$ the iterated
conjunction $\phi^k$ is defined by
\begin{equation}
\label{equation:iterated}
\phi^{1} = \phi, \,\, \,\, \phi^{2} = \phi \odot \phi,
\,\,\,\,  \phi^{3} = \phi \odot \phi \odot \phi,\ldots.
\end{equation}
The iterated disjunction
$\;k\centerdot \phi\;$ is defined by
\begin{equation}
\label{equation:iterated-disj}
1\centerdot \phi = \phi, \,\,\,\,
  2\centerdot \phi = \phi \oplus \phi, \,\,\,\, 3\centerdot \phi
= \phi \oplus \phi \oplus \phi,\ldots .
\end{equation}

\begin{definition}
\label{definition:valuation}
A {\it valuation} \index{Valuation} (of ${\mathsf{FORM}}_{n}$ in
\L$_\infty$) is a function $V\colon {\mathsf{FORM}}_{n} \to [0,1]\,$
such that
$$
         {\it V}(\neg \phi) =  1- {\it V}(\phi), \,\,\,\,
{\it V} (\phi\oplus \psi) =  %V(\phi)\oplus V(\psi) \,\,\,=\,\,\,
\min(1,{\it V} (\phi)+{\it V}(\psi))
$$
and, for the derived connectives $\odot,\vee,\wedge,$
\begin{eqnarray}
\label{equation:valuation1}
{\it V} (\phi \odot \psi) &=&
  \max(0, {\it V} (\phi)+{\it V}(\psi)-1)
=
  V(\neg(\neg\phi\oplus \neg\psi))\\
  \label{equation:valuation2}
{\it V} (\phi\vee \psi) &=& %V(\phi)\vee V(\psi) \,\,\,=\,\,\,
\max({\it V} (\phi), {\it V}(\psi))
=
V(\neg(\neg \phi \oplus \psi)\oplus \psi)\\
\label{equation:valuation3}
{\it V} (\phi\wedge \psi) &=& %V(\phi)\wedge V(\psi) \,\,\,=\,\,\,
\min({\it V} (\phi), {\it V}(\psi))
=
V(\neg(\neg\phi \vee \neg \psi)).
\end{eqnarray}
\end{definition}

\medskip
We denote by $\mathsf{VAL}_n$ the
set of valuations of $\mathsf{FORM}_n$.
More generally, for any   set $\mathcal X$ of variables,
$\mathsf{VAL}_{\mathcal X}$
denotes the set of valuations
  $V\colon {\mathsf{FORM}}_{\mathcal X} \to [0,1]$.

The non-ambiguity of the syntax of \L$_{\infty}$
is to the effect that each
  $V\in \mathsf{VAL}_n$ is uniquely determined
by its restriction to  $\{X_1,\dots,X_n\}.$
Thus  for
every   point $x
= (x_1,\ldots,x_n)\in
\I^n$  there is a uniquely determined valuation
$V_x\in \mathsf{VAL}_n$    such that
\begin{equation}
\label{equation:doppiovu}
V_x(X_i)=x_i  \,\mbox{ for all  }
\, i=1,\ldots,n.
\end{equation}
Conversely, upon
identifying the two sets
$\I^n$ and $\I^{\{X_1,\ldots,X_n\}}$, we
can write  $ x=V_x\restrict\{X_1,\ldots,X_n\}.$

\smallskip
 
For any set $\Phi\subseteq {{\mathsf{FORM}}}_{\mathcal X}$
and $V\in \mathsf{VAL}_{\mathcal X}$
we say that $\,V$ {\it satisfies}
$\,\Phi$ if  $V(\psi)=1$ for all $\psi\in \Phi.$
A formula  $\phi$ is a {\it tautology}
if it is satisfied by all  valuations
$V\in \mathsf{VAL}_{\var(\phi)}.$

\begin{proposition}
\label{proposition:woj}
For all $n=1,2,\ldots$
and   $\theta,\phi\in \mathsf{FORM}_n$ the following conditions are
equivalent:
\begin{itemize}
     \item[(i)]
Every valuation  $V\in \mathsf{VAL}_n$
  satisfying  $\theta$ also satisfies $\phi$.
  In other words, $ \phi$ is  a 
  {\rm semantic  \L$_\infty$-consequence} of $\theta$;

\smallskip
      \item[(ii)] For some  integer
      $k>0$ the formula $  \theta^k\to \phi$ is a
tautology. (Notation of (\ref{equation:iterated})).

\smallskip
   \item[(iii)] For some  integer $k>0$
	the formula
\begin{equation}
     \label{equation:iterated-implication}
	\underbrace{\theta\to(\theta\to(\theta\to
	\cdots\to(\theta\to (\theta\to\phi))\cdots))}_{
	\mbox{\tiny $k$  occurrences of $\theta$}}
\end{equation}
	is a tautology.

  \smallskip
\item[(iv)] For some integer $k>0$ there is a sequence of formulas
$\chi_0,\ldots,\chi_{k+1}$ such that $\chi_0=\theta$,
$\chi_{k+1}=\phi,$ and for each $i=1,\ldots,k+1$
either  $\chi_i$ is a tautology, or there are
   $p,q \in\{0,\ldots,i-1\}$ such that
$\chi_{q}$ is the formula $\chi_p\to \chi_i$.

  \smallskip
			\item[(v)] For some integer $k>0$ there is a 
sequence of formulas
$\chi_0,\ldots,\chi_{k+1}$ such that $\chi_0=\theta$,
$\chi_{k+1}=\phi,$ and for each $i=1,\ldots,k+1$
either  $\chi_i$ is a tautology in $\mathsf{FORM}_n$, or there are
   $p,q \in\{0,\ldots,i-1\}$ such that
$\chi_{q}$ is the formula $\chi_p\to \chi_i$.
In other words,  $\phi$ is a {\rm syntactic \L$_\infty$-consequence} of $\theta$.

\end{itemize}
\end{proposition}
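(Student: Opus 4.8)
The plan is to prove the cycle of implications (i)$\,\Rightarrow\,$(iii)$\,\Rightarrow\,$(v)$\,\Rightarrow\,$(i), and then to attach (ii) and (iv) by the separate equivalences (ii)$\,\Leftrightarrow\,$(iii) and (iv)$\,\Leftrightarrow\,$(v). Throughout I identify each $V\in\Val_n$ with the point $x=V\restrict\{X_1,\dots,X_n\}\in\I^n$ via~(\ref{equation:doppiovu}), and I repeatedly use the identity $V(\alpha\to\beta)=\min(1,\,1-V(\alpha)+V(\beta))$, which follows at once from $\alpha\to\beta=\beta\oplus\neg\alpha$ and Definition~\ref{definition:valuation}.

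I would first dispose of the two easy linkages. A direct induction on $k$, using the displayed identity for $\to$, gives for the formula in~(\ref{equation:iterated-implication}) the closed form $\min(1,\,V(\phi)+k(1-V(\theta)))$; on the other hand $V(\theta^k\to\phi)=\min(1,\,V(\phi)+1-V(\theta^k))$ with $V(\theta^k)=\max(0,\,kV(\theta)-(k-1))$, and a one-line computation shows these two numbers coincide for every $V$. Hence, for each fixed $k$, the formula in~(iii) is a tautology iff $\theta^k\to\phi$ is one, which yields (ii)$\,\Leftrightarrow\,$(iii); conceptually this is just the residuation law $\theta^k\to\phi\equiv\theta^{k-1}\to(\theta\to\phi)$ iterated. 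For (iv)$\,\Leftrightarrow\,$(v) I note that $V(\chi)$ depends only on $V\restrict\var(\chi)$, so a formula of $\Form_n$ is a tautology in the sense of the definition above exactly when it is satisfied by all of $\Val_n$; thus the admissible sequences allowed in~(iv) and in~(v) are verbatim the same, and the two conditions coincide.

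Next I would treat soundness, (v)$\,\Rightarrow\,$(i). Fix $V$ with $V(\theta)=1$ and a sequence $\chi_0,\ldots,\chi_{k+1}$ as in~(v); by induction on $i$ I claim $V(\chi_i)=1$. If $\chi_i$ is a tautology this is immediate, and if $\chi_q=\chi_p\to\chi_i$ with $p,q<i$, the induction hypothesis gives $V(\chi_p)=V(\chi_q)=1$, whence $1=V(\chi_p\to\chi_i)=\min(1,\,1-1+V(\chi_i))=V(\chi_i)$. Taking $i=k+1$ gives $V(\phi)=1$. For the reverse engine, (iii)$\,\Rightarrow\,$(v), I would convert the tautology of~(iii) into an explicit modus ponens derivation: writing $J_0=\phi$ and $J_j=\theta\to J_{j-1}$, so that $J_k$ is precisely the formula~(\ref{equation:iterated-implication}), I set $\chi_0=\theta$, $\chi_1=J_k$, and $\chi_i=J_{k-i+1}$ for $2\le i\le k+1$. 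Then $\chi_1$ is the given tautology, each $\chi_i$ with $i\ge2$ is detached from the earlier $\chi_0=\theta$ and $\chi_{i-1}=\theta\to\chi_i$, and $\chi_{k+1}=J_0=\phi$, as required.

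This leaves the genuinely analytic step, (i)$\,\Rightarrow\,$(iii), which I expect to be the main obstacle. Setting $a(x)=1-V_x(\theta)$ and $b(x)=1-V_x(\phi)$, condition~(i) reads $a(x)=0\Rightarrow b(x)=0$, whereas~(iii) demands a single integer $k$ with $b(x)\le k\,a(x)$ throughout $\I^n$. Compactness of $\I^n$ with continuity of $a,b$ is not enough, since for merely continuous functions vanishing on a common set the ratio $b/a$ can be unbounded; what secures a uniform $k$ is that $a$ and $b$ are McNaughton functions, continuous and piecewise linear with integer coefficients over a common finite polyhedral subdivision of $\I^n$. On each closed cell $C$ the map $x\mapsto(a(x),b(x))$ is affine, so its image is a polytope $Q_C$ in the first quadrant; condition~(i) forces $Q_C\cap\{u=0\}\subseteq\{v=0\}$, and since $Q_C$ is bounded with finitely many vertices, $\sup\{\,v/u:(u,v)\in Q_C,\ u>0\,\}$ is finite, producing an integer $k_C$ with $b\le k_C\,a$ on $C$. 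Taking $k=\max_C k_C$ over the finitely many cells yields the global bound, hence~(iii). Alternatively, one may invoke the completeness theorem for \L$_\infty$ together with the representation of the free \L$_\infty$-algebra by McNaughton functions, for which I refer to \cite[\S 4]{cigdotmun}.
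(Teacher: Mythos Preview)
Your overall scheme is sound and, for the substantive implication (i)$\Rightarrow$(iii), considerably more explicit than the paper's own proof, which handles the hard steps by citing \cite[4.5.2, 4.6.4, 4.6.7]{cigdotmun} for (iv)$\Leftrightarrow$(i) and (iv)$\Leftrightarrow$(iii), and otherwise only verifies (ii)$\Leftrightarrow$(iii), (v)$\Rightarrow$(iv), and (iii)$\Rightarrow$(v) directly. Your polyhedral argument is essentially the content behind those citations: on each cell $C$ of a common linear subdivision the affine image $Q_C\subseteq[0,1]^2$ is a polytope whose vertices on $\{u=0\}$ are forced by (i) to be the origin, and then the mediant inequality bounds $v/u$ on $Q_C$ by its maximum over the remaining finitely many vertices, giving a cell-wise and hence global integer $k$. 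This is a correct, self-contained replacement for the appeal to the completeness theorem.

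There is, however, a real gap in your handling of (iv)$\Leftrightarrow$(v). The distinction between the two items is \emph{not} about two readings of ``tautology'' for formulas already in $\Form_n$; it is that in (iv) the intermediate tautologies $\chi_i$ are not required to lie in $\Form_n$ at all---a derivation may invoke axiom instances in fresh variables $X_{n+1},X_{n+2},\ldots$. So the admissible sequences in (iv) and (v) are not verbatim the same, and your one-line identification fails. The paper accordingly treats only the direction (v)$\Rightarrow$(iv) as trivial. The repair is easy inside your framework: keep (v)$\Rightarrow$(iv) as immediate, and observe that your soundness argument already yields (iv)$\Rightarrow$(i) once you extend the given $V\in\Val_n$ arbitrarily to any extra variables occurring in the sequence (the values $V(\theta)$ and $V(\phi)$ are unaffected, and the inductive step goes through unchanged). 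This closes the chain (i)$\Rightarrow$(iii)$\Rightarrow$(v)$\Rightarrow$(iv)$\Rightarrow$(i) and, together with your (ii)$\Leftrightarrow$(iii), finishes the proof.
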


\begin{proof}  (ii)$\Leftrightarrow$(iii)  is promptly verified,
because the two formulas  (\ref{equation:iterated-implication}) and
       $  \theta^k\to \phi$
       %%
        % and
        % $$
        %   \underbrace{\theta\to(\theta\to(\theta\to
        %   \ldots\to(\theta\to (\theta\to\phi))\ldots))}_{
        %   \mbox{\tiny $k$  occurrences of $\theta$}}
        %   $$
        %%
       are equivalent in \L$_\infty$.
(iv)$\Leftrightarrow$(i)   follows from
        \cite[4.5.2, 4.6.7]{cigdotmun}.
     (iv)$\Leftrightarrow$(iii) follows from
     \cite[4.6.4]{cigdotmun}.
     (v)$\Rightarrow$(iv) is trivial.
     Finally, to prove
  (iii)$\Rightarrow$(v), arguing by
   induction on $k$,  one verifies  that
     $\phi$  can be obtained as the final formula  $\chi_{k+1}$ of
     a sequence $\chi_0,\ldots,\chi_{k+1}$  as in (v),
which only requires
  the assumed tautology
  (\ref{equation:iterated-implication}).
\end{proof}

We write $\theta\vdash \phi$  if  $\theta$ and $\phi$
satisfy the equivalent conditions   above,
 and we say  that
$\phi$ is an {\it \L$_\infty$-consequence} of $\theta$  
 without fear of ambiguity.

\smallskip
An instance of the
\L$_\infty$-{\it consequence problem}
%\cons
  is a pair of formulas  $(\theta, \phi)$.
  The problem asks
if $\phi$ is an
  \L$_\infty$-consequence of
$\theta$.

%%%%%%%%%%%%%%%%%%%%%%%%%%%%%%%%%%%%%%%%%%%%%%%%%%%%%%%%%%%%%%
\section{The function $\hat{\phi}$ associated to an \L$_\infty$-formula
$\phi$}
%%%%%%%%%%%%%%%%%%%%%%%%%%%%%%%%%%%%%%%%%%%%%%%%%%%%%%%%%%%%%%

\begin{proposition}
\label{proposition:tutto}
To every formula  $\phi = \phi(X_1,\ldots,X_n)
\in \mathsf{FORM}_{n}$
let us  associate  a function, denoted
$\widehat{\phi} \colon [0,1]^n \to [0,1],$
via the following   inductive
procedure: for all
$x = (x_1,\ldots,x_n)
\in [0,1]^n,$

\begin{eqnarray*}
\widehat{X_i}(x)
& = & x_i
   \;\; (i=1,\ldots,n),    \\
\widehat{\neg \psi}(x)
& = & 1- \widehat{\psi(x)},   \\
\widehat{\psi\oplus \chi}(x)
& = &
%\widehat{\psi}(x) \oplus \widehat{\chi}(x)\;\;=\;\;
\min(1, \widehat{\psi}(x) + \widehat{\chi}(x)),\\
\widehat{\psi\odot \chi}(x)
& = &
%\widehat{\psi}(x)\odot \widehat{\chi}(x)\;\;=\;\;
\max(0, \widehat{\psi}(x)
+ \widehat{\chi}(x) -1),\\
\widehat{\psi \wedge \chi}(x)
& = &
  %\widehat{\psi}(x) \wedge \widehat{\chi}(x)\;\;=\;\;
  \min(\widehat{\psi}(x), \widehat{\chi}(x)),\\
\widehat{\psi \vee \chi}(x)
& = &
%\widehat{\psi}(x) \vee \widehat{\chi}(x)\;\;=\;\;
\max( \widehat{\psi}(x), \widehat{\chi}(x)).
\end{eqnarray*}
Then generalizing
(\ref{equation:doppiovu})  we have the identity
     \begin{equation}
	\label{equation:evaluation}
    \hat{\phi}(x)= {V_x}(\phi)
      \,\,\,\mbox{ for all } \, x \in [0,1]^{n}.
     \end{equation}
\end{proposition}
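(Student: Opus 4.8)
The plan is to prove the identity (\ref{equation:evaluation}) by structural induction on the formula $\phi$, with the point $x \in [0,1]^n$ held fixed throughout. The whole argument rests on a single observation: the inductive clauses defining $\widehat{\phi}$ are built to mirror, connective by connective, the clauses defining a valuation in Definition \ref{definition:valuation}, together with the derived-connective equations (\ref{equation:valuation1})-(\ref{equation:valuation3}). Thus the proof reduces to matching each clause of the one list against the corresponding clause of the other.

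For the base case I would take $\phi = X_i$, where $\widehat{X_i}(x) = x_i$ by definition while $V_x(X_i) = x_i$ by (\ref{equation:doppiovu}), so the two sides agree. For the inductive step on the primitive connectives, suppose first $\phi = \neg\psi$; then $\widehat{\phi}(x) = 1 - \widehat{\psi}(x)$, which by the induction hypothesis equals $1 - V_x(\psi)$, and this is $V_x(\neg\psi)$ by the defining property of a valuation. The case $\phi = \psi \oplus \chi$ runs identically, using $V_x(\psi \oplus \chi) = \min(1, V_x(\psi) + V_x(\chi))$ from Definition \ref{definition:valuation}.

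The one point that requires care — and the only place I expect any friction, though it stays routine — is the treatment of the derived connectives $\odot, \wedge, \vee$, which are abbreviations rather than primitive symbols. Two equivalent routes are available. One may run the structural induction only over formulas built from $\neg$ and $\oplus$, and then separately check that the listed clauses for $\widehat{\psi \odot \chi}$, $\widehat{\psi \wedge \chi}$, $\widehat{\psi \vee \chi}$ agree with the result of unfolding the abbreviations; for instance $\widehat{\neg(\neg\psi \oplus \neg\chi)}(x) = 1 - \min(1, (1 - \widehat{\psi}(x)) + (1 - \widehat{\chi}(x))) = \max(0, \widehat{\psi}(x) + \widehat{\chi}(x) - 1)$, matching the $\odot$-clause. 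More directly, one may simply include a case for each derived connective in the induction and invoke the matching equation among (\ref{equation:valuation1})-(\ref{equation:valuation3}): the induction hypothesis turns $\widehat{\psi \odot \chi}(x)$ into $\max(0, V_x(\psi) + V_x(\chi) - 1)$, which is precisely $V_x(\psi \odot \chi)$ by (\ref{equation:valuation1}), and likewise for $\vee$ and $\wedge$ through (\ref{equation:valuation2}) and (\ref{equation:valuation3}). Either way the induction closes, establishing (\ref{equation:evaluation}) for every $\phi \in \mathsf{FORM}_n$.
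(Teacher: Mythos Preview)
Your proof is correct and follows the same approach as the paper: structural induction on $\phi$, matching each clause in the definition of $\hat\phi$ against the corresponding clause for valuations in Definition~\ref{definition:valuation} and equations (\ref{equation:valuation1})--(\ref{equation:valuation3}). The paper's own proof is a one-line remark to this effect, noting additionally that the well-definedness of $\hat\phi$ rests on the unique readability of the syntax of \L$_\infty$.
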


\begin{proof}
Immediate by Definition
\ref{definition:valuation}, arguing
by  induction on the number of connectives in
$\phi$. It should be noted  that
the definition of $\hat\phi$
relies on  the non-ambiguity of the syntax of
\L$_\infty.$
     \end{proof}

\begin{proposition}
\label{proposition:due-sei}
For each  $n=1,2,\ldots,$  $\,e=2,3,\ldots$  and
  valuation $V \colon \Form_n \to [0,1],$
the following conditions are equivalent:

\smallskip
\begin{itemize}
\item[(i)]
$
V\,\, \mbox{satisfies}\,\,
\bigwedge_{i=1}^n   (X_i^e \leftrightarrow \neg X_i) \vee
(X_i \leftrightarrow \neg \,\,e\centerdot X_i). $
(Notation of (\ref{equation:iterated})-(\ref{equation:iterated-disj})).
%\mbox{ iff }
%\alpha\mbox{satisfies}
%\theta_e(X_1) \odot \cdots \odot \theta_e(X_n)

\medskip
\item[(ii)]  For each  $\,i=1,\ldots, n$, 
$\,\, V(X_i)\in  \left\{\frac{1}{e+1},\frac{e}{e+1}\right\}.
$
\end{itemize}
\end{proposition}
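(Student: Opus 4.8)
The plan is to reduce both conditions to a single real equation in the variable $x = V(X_i)$ for each index $i$ separately, and then solve that equation explicitly. First I would observe that, since $\wedge$ is interpreted by $\min$ and $\vee$ by $\max$, the valuation $V$ satisfies the conjunction in (i) iff for every $i = 1,\ldots,n$ at least one of the two biconditionals $X_i^e \leftrightarrow \neg X_i$ and $X_i \leftrightarrow \neg\, e\centerdot X_i$ is satisfied by $V$. Thus the whole problem decouples over $i$, and I may fix a single index and write $x = V(X_i)$.

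Next I would record the two elementary evaluation formulas obtained by induction from Definition \ref{definition:valuation}: the iterated conjunction gives $V(X_i^e) = \max(0, ex - (e-1))$ and the iterated disjunction gives $V(e\centerdot X_i) = \min(1, ex)$. I would also use the standard identity $V(\alpha \leftrightarrow \beta) = 1 - |V(\alpha) - V(\beta)|$, which follows directly from the definitions of $\to$, $\odot$ and $\oplus$; consequently a biconditional is satisfied by $V$ precisely when its two sides receive equal value under $V$.

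Combining these, satisfaction of the first biconditional becomes the equation $\max(0, ex - (e-1)) = 1 - x$, and satisfaction of the second becomes $x = 1 - \min(1, ex)$. The core of the argument is then a short case analysis on the truncations. For the first equation, the branch $ex - (e-1) \le 0$ forces $x = 1$, which is incompatible with that branch, while the branch $ex - (e-1) > 0$ yields $(e+1)x = e$, i.e. $x = \frac{e}{e+1}$, and one checks this value does lie in the correct branch. Symmetrically, the second equation has no solution in the branch $ex \ge 1$ and gives $x = \frac{1}{e+1}$ in the branch $ex < 1$. Hence for each $i$ the disjunction of the two biconditionals is satisfied by $V$ iff $x \in \{\frac{1}{e+1}, \frac{e}{e+1}\}$, which is exactly (ii), and quantifying over $i$ finishes the equivalence.

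I expect the only delicate point to be the bookkeeping in the case analysis: one must verify that each candidate solution actually falls in the branch under which it was derived, so that the $\max$ and $\min$ really unfold as assumed, and confirm that the excluded branches are genuinely empty. Everything else is a routine induction and a direct substitution.
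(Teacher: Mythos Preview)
Your proposal is correct and follows essentially the same approach as the paper: both reduce to solving $\max(0,ex-(e-1))=1-x$ and its companion equation by a short case analysis on the truncation. The only cosmetic difference is that the paper handles the second biconditional via the symmetry $\widehat{\chi_e}(y)=\widehat{\xi_e}(1-y)$ rather than repeating the case split, but this is a stylistic shortcut, not a different idea.
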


\begin{proof}
Let
$\xi_e$ be the  \L$_\infty$-formula
$X^e \leftrightarrow \neg X$, and $\widehat{\xi_e}\colon \I\to \I$
its associated function.
Recalling
(\ref{equation:evaluation})
and the definition of the $\leftrightarrow$
connective,
  for every $y\in [0,1],\,\,\,$
  we can write
$\;\widehat{\xi_e}(y)=1\;$ iff
$\;\widehat{X^e}(y)
=
1-y$.  Further, by induction on $e$, 
$$
\widehat{X^e}(y)  =
\underbrace{y\odot \cdots \odot y}_{e\,\,\,
  \mbox{\footnotesize times}}=
  \max(0, ey-e+1) =
\begin{cases}
0 & \mbox{ if }  \;0 \leq y < \frac{e-1}{e}  \cr
  ey-e+1 &  \mbox{ if }  \;\frac{e-1}{e} \leq y \leq 1 . \\
  \end{cases}
$$
Thus, $\widehat{\xi_e}(y) = 1\;$
iff
$\;ey -e + 1 = 1-y\;$
iff    $y =\laste$.
In other words,
a valuation satisfies
$X^e \leftrightarrow \neg X$ iff it
evaluates $X$ to $\frac{e}{e+1}.$

Similarly, letting
$\chi_e$ be the formula
$ X \leftrightarrow \neg \,\,e\centerdot X$
we obtain $\widehat{\chi_e}(y) =
\widehat{\xi_e}(1-y)$,
whence
$\widehat{\chi_e}(y)=1\;$
iff $\;\widehat{\xi_e}(1-y) = 1\;$
iff
$\;1-y =\laste\;$ iff
  $\;y =\firste.$
  Thus a valuation satisfies
  $ X \leftrightarrow \neg \,\,e\centerdot X$
  iff it evaluates  $X$ to $\firste.$

  Summing up, by
  (\ref{equation:valuation2})-(\ref{equation:valuation3}),
  a valuation
  satisfies
  $\bigwedge_{i=1}^n   (X_i^e \leftrightarrow \neg X_i) \vee
(X_i \leftrightarrow \neg \,\,e\centerdot X_i)$
iff it evaluates each  $X_i$ either to $\firste$ or to $\laste.$
\end{proof}

%%%%%%%%%%%%%%%%%%%%%%%%%%%%%%%%%%%%%%%%%%%%%%%%%%%%%%%%%%%%%%
\section{The $\ddagger$-transform of a boolean formula}
%%%%%%%%%%%%%%%%%%%%%%%%%%%%%%%%%%%%%%%%%%%%%%%%%%%%%%%%%%%%%%

As the reader will recall,
every  boolean formula  $\psi$  in this paper is constructed
  from the variables only using the connectives
  $\neg, \vee, \wedge.$
A boolean formula    is said to be
in {\it negation normal form}
if  the negation symbol can only precede
  a variable. Any boolean formula  $\psi$ can be
  immediately  reduced
  into an equivalent formula $\psi^\dagger$ in
   negation normal form by
   using De Morgan's laws to push negation inside
   all conjunctions and disjunctions,
   and eliminating double negations.
   The same variables occur in $\psi$ and
   $\psi^\dagger$.
Further, the number of occurrences of variables in
  $\psi$ is the same as in 
  $\psi^\dagger$.

  \begin{definition}
  \label{definition:ddagger}
  Let $\psi=\psi(X_1,\dots,X_n)$ be a boolean formula.
We denote by
$\psi^{\ddagger}$   the formula
   in \luk\ logic \L$_\infty$ obtained from $\psi$
   by the following procedure:

   \begin{itemize}
   \item[---]   write the negation normal form $\psi^\dagger$, and for
   each  $ i=1,\dots,n,$

   \item[---] replace every occurrence of $\neg X_i$ in
    $\psi^\dagger$ by the formula
    $X_i \vee \neg (X_i\odot X_i),$

   \item[---] and simultaneously replace every occurrence
    of the non-negated variable   $X_i$ by the formula
    $\neg X_i \vee (X_i\oplus X_i),\quad i=1,\dots,n.$
   \end{itemize}

\noindent
In other words,   the
{\it $\ddagger$-transform}
   $\psi^\ddagger$  of $\psi$ is the
   \L$_\infty$-formula defined by:
   \begin{eqnarray*}
    (\neg X_i)^{\ddagger}&=& X_i \vee \neg (X_i\odot X_i), \\
      X_i^{\ddagger}&=&\neg X_i \vee (X_i\oplus X_i),
  \,\,\, \mbox{ if $X_i$ is not preceded by $\neg$}
    \end{eqnarray*}
  and by induction on the number of binary connectives in $\psi^\dagger,$
   \begin{eqnarray*}
    (\sigma \wedge \tau)^{\ddagger} &=& \sigma^{\ddagger}
    \wedge \tau^{\ddagger}\\
     (\sigma \vee \tau)^{\ddagger} &=& \sigma^{\ddagger}\vee \tau^{\ddagger}.
   \end{eqnarray*}
   \end{definition}
  \unitlength0.65cm
\begin{picture}(5,5)

\multiput(0,0)(5,0){4}{\line(1,0){4}}
\multiput(0,0)(5,0){4}{\line(0,1){4}}
\multiput(0,4)(5,0){4}{\line(1,0){4}}
\multiput(4,0)(5,0){4}{\line(0,1){4}}

\thicklines

%PRIMO  x \oplus x
\put(0,0){\line(1,2){2}}
\put(2,4){\line(1,0){2}}
%\put(0.5,3.2){\footnotesize{$(l+1)X\odot \neg X$}}
\put(-1.4,2.9){\footnotesize{$\widehat{X\oplus X}$}}

%SECONDO not x
\put(5,4){\line(1,-1){4}}
%\put(0.5,3.2){\footnotesize{$(l+1)X\odot \neg X$}}
\put(4.2,3){\footnotesize{$\widehat{\neg X}$}}

%%TERZO
\put(11.3,2.69){\line(1,2){0.65}}
\put(10,4){\line(1,-1){1.30}}
\put(11.95,4){\line(1,0){2.05}}
\put(9.4,2.9){\footnotesize{$\widehat{X^{\ddagger}}$}}

%QUARTO
\put(15,4){\line(1,0){2}}
\put(17,4){\line(1,-2){0.65}}
\put(17.65,2.7){\line(1,1){1.3}}
\put(14.1,2.9){\footnotesize{$\widehat{\neg X^{\ddagger}}$}}

\put(2.8,-1.0){\mbox{\small
Figure 1.  The graphs of the functions $\widehat{X\oplus X}$,
$\widehat{\neg X}$, $\widehat{X^{\ddagger}}$}
and 	$\widehat{\neg X^{\ddagger}}$.}

\end{picture} 

\bigskip

\bigskip

\bigskip
\begin{definition}
\label{definition:transform}
     Fix $e=2,3,\ldots .$
For each
$y \in \{0,1\}$ we let  $y \toe$ be the only
point of $\I$ lying at a distance
$\frac{1}{e+1}$  from $y$.
%%
  % Thus,
  % $$
  % y \toe =
  % \begin{cases}
  %	\firste	&  \mbox{if}  \,\,\,\,y=0, \cr
  % \laste  &  \mbox{if}	 \,\,\,\, y=1.\cr
  % \end{cases}
  % $$
  %%
More generally,  for any
$x = (x_1, \ldots ,x_{m} ) \in
\{0,1\}^{m}$, the point  $x\toe\in \I^m$
is defined by
$
x\toe  = (x_1 \toe, \ldots ,x_{m}\toe ).
$
%and is called the {\it $e$-transform of $x$}.
\end{definition}

   \begin{proposition}
   \label{proposition:2223}
For  any boolean valuation  $W,$
   $$
   W\colon \{\mbox{boolean formulas in the
   variables}\,\,\, X_1,\dots,X_n\}
   \to \{0,1\},
   $$
    let $w\in \{0,1\}^{\{X_1,\dots,X_n\}}=\cantor$ be the restriction
  of W  to  the set $\{X_1,\dots,X_n\}$.  Then for every
boolean formula $\psi(X_1,\dots,X_n)$ and $e=2,3,\ldots$
   we have:
   \begin{eqnarray*}
   W\,\,\mbox{satisfies}\,\,\, \psi & \mbox{iff}&
   \;\widehat{\psi^{\ddagger}}(w\toe) = 1\\
   W\,\,\mbox{does not satisfy}\,\,\,\psi
   & \mbox{iff}&\;\widehat{\psi^{\ddagger}}(w\toe) = \laste.
   \end{eqnarray*}
\end{proposition}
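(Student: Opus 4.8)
The plan is to argue by induction on the structure of the negation normal form $\psi^\dagger$, exploiting the fact that $\psi$ and $\psi^\dagger$ are satisfied by exactly the same boolean valuations, so that it suffices to prove the two displayed equivalences with $\psi$ replaced by $\psi^\dagger$. Since every coordinate $(w\toe)_i$ equals either $0\toe=\firste$ (when $W(X_i)=0$) or $1\toe=\laste$ (when $W(X_i)=1$), throughout the argument each variable is evaluated at one of just two points, and I would show by induction that $\widehat{(\psi^\dagger)^\ddagger}(w\toe)$ always lands in the two-element set $\{1,\laste\}$, taking the value $1$ precisely when $W$ satisfies $\psi^\dagger$.

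For the base of the induction I would compute the two literal transforms directly from the evaluation rules of Proposition \ref{proposition:tutto}. Using $e\geq 2$ (so that $\frac{2}{e+1}<1$ and $\frac{2e}{e+1}>1$), one checks that $\widehat{X_i^\ddagger}=\max(1-x_i,\min(1,2x_i))$ sends $\firste\mapsto\laste$ and $\laste\mapsto 1$, while $\widehat{(\neg X_i)^\ddagger}=\max(x_i,1-\max(0,2x_i-1))$ sends $\firste\mapsto 1$ and $\laste\mapsto\laste$. These are exactly the four values pictured in Figure 1, and they say precisely that the transform of a literal evaluates to $1$ at $w\toe$ when $W$ satisfies that literal and to $\laste$ when it does not. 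This base case is where I expect the real work to sit: the whole construction hinges on the algebraic design of the transform, and the hypothesis $e\geq 2$ is exactly what forces the two thresholds to fall on the correct side.

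The inductive step is then a routine two-valued computation. Since Definition \ref{definition:ddagger} makes $(\cdot)^\ddagger$ commute with $\wedge$ and $\vee$, and Proposition \ref{proposition:tutto} interprets these as $\min$ and $\max$, I have $\widehat{(\sigma\wedge\tau)^\ddagger}=\min(\widehat{\sigma^\ddagger},\widehat{\tau^\ddagger})$ and $\widehat{(\sigma\vee\tau)^\ddagger}=\max(\widehat{\sigma^\ddagger},\widehat{\tau^\ddagger})$. The induction hypothesis places both inner values in $\{1,\laste\}$, and on this set $\min$ returns $1$ iff both inputs are $1$ while $\max$ returns $1$ iff some input is $1$; in particular the output again lies in $\{1,\laste\}$. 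Matching this against the boolean semantics of $\wedge$ and $\vee$ closes the induction, and reading off the case $\widehat{(\psi^\dagger)^\ddagger}(w\toe)=1$ versus $\,=\laste$ yields the two equivalences in the statement.
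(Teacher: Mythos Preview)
Your proposal is correct and follows essentially the same approach as the paper: first compute the four base-case values $\widehat{X^\ddagger}$ and $\widehat{(\neg X)^\ddagger}$ at the two points $\frac{1}{e+1},\frac{e}{e+1}$ (the paper records these as items (i)--(iv)), and then run the obvious structural induction on $\psi^\dagger$ using that $(\cdot)^\ddagger$ commutes with $\wedge,\vee$ and that $\min,\max$ on the two-element set $\{\frac{e}{e+1},1\}$ mimic boolean conjunction and disjunction. Your presentation is somewhat more streamlined than the paper's, but the argument is the same.
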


\begin{proof}
Our assumption about $e$ ensures that
$0\toe < 1\toe.$
For each variable
$X$ we first prove  (see Fig. 1):

\smallskip
  \begin{itemize}
  \item[(i)] $\widehat{X^{\ddagger}}(\firste)
  =\laste$, \label{one:tauval:i}

  \smallskip
  \item[(ii)]  $\widehat{X^{\ddagger}}(\laste)
   =1$,\label{one:tauval:ii}

   \smallskip
  \item[(iii)]  $\widehat{\neg X^{\ddagger}}(\firste)
   =1$,\label{one:tauval:iii}

   \smallskip
  \item[(iv)]  $\widehat{\neg X^{\ddagger}}(\laste)
  =\laste$\label{one:tauval:iv}.
  \end{itemize}

\medskip
(i)--(ii)  By
(\ref{equation:evaluation}),
    for all $y \in [0,1]$
we can write
$\widehat{X^{\ddagger}}(y)
= \max(\widehat{\neg  X}(y),\widehat{X \oplus X}(y))
$
$
= \max(1-y,\min(1,2y)).
$
Thus, 
$$ \widehat{X^{\ddagger}}\left(\firste\right)   =
\max\left(\laste,\min(1,\frac{2}{e+1})\right) =
\max\left(\laste,\frac{2}{e+1}\right)=\frac{e}{e+1}
$$
and
$$
\widehat{X^{\ddagger}}\left(\laste\right)    =
\max\left(\firste,\min(1,\frac{2e}{e+1})\right)
= \max\left(\firste,1\right) = 1.
$$

\medskip
\medskip
(iii)--(iv)
Again by
(\ref{equation:evaluation}), we can write
$ \widehat{\neg X^{\ddagger}}(y)  =
\max(\widehat{X}(y),\widehat{\neg (X \odot X)}(y))
= \max(y, 1-\max(0, 2y-1)) = \max(y, \min(1,2-2y)),$
whence

\medskip
$$
\widehat{\neg X^{\ddagger}}\left(\firste\right)   =
\max\left(\firste, \min(1,2-\frac{2}{e+1})\right) =
\max\left(\firste,1\right)
= 1
$$
and
$$
\widehat{\neg X^{\ddagger}}\left(\laste\right)     =
\max\left(\laste,\min(1,2-\frac{2e}{e+1})\right)
=
\max\left(\laste,\frac{2}{e+1}\right) = \laste.
$$

\bigskip
Having thus settled (i)-(iv), the
  proof now proceeds by induction on the number $b$ of
binary connectives in $\psi^\dagger$, the equivalent
counterpart of $\psi$ in negation normal form.

\bigskip

\noindent{\it Basis, $b=0$.}
Then $\psi^\dagger\in \{X_i,\neg X_i\}.$

In case   $\psi^\dagger = X_i$ we have
\begin{eqnarray*}
&& W \mbox{ satisfies } \psi\\[0.08cm]
&\mbox{iff}&  W  \mbox{ satisfies } X_i, \mbox{ (because } 
\psi^\dagger  \mbox{ is
equivalent to } \psi )\\
&\mbox{iff}& w_i =1,  \mbox{  by definition of $w$}\\
&\mbox{iff}&   w_i\toe =\laste, \mbox{  by definition of $w_i{\toe}$}\\
&\mbox{iff}&  \widehat{X_i^{\ddagger}}(w_i\toe) =
\widehat{\psi^{\ddagger}}(w_i\toe)=1.
  \end{eqnarray*}
  The  $(\Downarrow)$-direction of the last bi-implication
  follows from (ii).
Conversely, for the $(\Uparrow)$-direction, if
  $w_i\toe\not=\frac{e}{e+1}$ then
  $w_i\toe=\frac{1}{e+1}$, whence by
  (i), $ \widehat{X_i^{\ddagger}}(w_i\toe) =
  \laste \not=1.$

%  $\psi^\dagger$ iff $w_i=1$ iff
%$w_i\toe =\laste$. Thus, by (i)-(ii),
%%Lemma \ref{lemma:tauval},
%$\widehat{\psi^\dagger}(w\toe)
%= \widehat{X_i^{\ddagger}}(\laste) = 1$.
%Conversely, $W$ does not satisfy $\psi^\dagger$ iff
%$w_i = 0$ iff  $w_i\toe=\firste$, whence
%again by (i)-(ii),
%%Lemma \ref{lemma:tauval},
% $\widehat{\psi^{\ddagger}}(w\toe)
% = \widehat{X_i^{\ddagger}}(\firste) =\laste$.
%
%
%

\smallskip

  The case  $\psi^\dagger = \neg X_i$ is similarly
  proved using (iii)-(iv).

\bigskip
\noindent{\it Induction step.}
Suppose  $\psi^\dagger = \sigma \wedge \tau$. Then
\begin{eqnarray*}
&& W \mbox{ satisfies } \psi\\
&\mbox{iff}&  W  \mbox{ satisfies } \psi^\dagger\\
&\mbox{iff}&  W  \mbox{ satisfies  both }   \sigma^\dagger
  \mbox{ and }
\tau^\dagger\\
&\mbox{iff}&  W  \mbox{ satisfies  both }   \sigma \mbox{ and }
\tau\\
&\mbox{iff}& \widehat{\sigma^{\ddagger}}(w\toe)
= \widehat{\tau^{\ddagger}}(w\toe) = 1,  \mbox{ by induction hypothesis.}
\end{eqnarray*}
Thus, if $W$ satisfies  $\psi$ then
$$\widehat{\psi^{\ddagger}}(w\toe)
=(\widehat{\sigma^{\ddagger}}
\wedge \widehat{\tau^{\ddagger}})(w\toe) =
\min(1,1)
=1.$$

Conversely,
\begin{eqnarray*}
&& W \mbox{ does not satisfy } \psi\\
&\mbox{iff}&    \mbox{ either  } \sigma  \mbox{ or  } \tau
\mbox{ is not satisfied by } W\\
&\mbox{iff}&    \mbox{ either  }\,\,\,
\widehat{\sigma^{\ddagger}}(w\toe)=\laste\,\,\,   \mbox{ or  } \,\,\,
\widehat{\tau^{\ddagger}}(w\toe)=\laste, \\
&\mbox{whence}&\widehat{\psi^{\ddagger}}(w\toe) =
\min(\widehat{\sigma^{\ddagger}}(w\toe),
\widehat{\tau^{\ddagger}}(w\toe)) =
\laste.  
\end{eqnarray*}

%$W$ satisfies $\psi^\dagger$ iff $W$ satisfies both
%$\sigma$ and $\tau$. By induction hypothesis,
%$\widehat{\sigma^{\ddagger}}(w\toe)
%= \widehat{\tau^{\ddagger}}(w\toe) = 1$,
%whence by construction,
%$$\widehat{\psi^{\ddagger}}(w\toe)
%=(\widehat{\sigma^{\ddagger}}
%\wedge \widehat{\tau^{\ddagger}})(w\toe) =
%\min( \widehat{\sigma^{\ddagger}}(w\toe),
%\widehat{\tau^{\ddagger}}(w\toe))
%=1.$$
%Conversely, $W$ does not satisfy $\psi^\dagger$ iff
%either $\sigma$ or $\tau$ is not satisfied by
%$W$;   by induction hypothesis we either
%have $\widehat{\sigma^{\ddagger}}(w\toe)=\laste$ or
%$\widehat{\tau^{\ddagger}}(w\toe)=\laste$, whence,
%$\widehat{\psi^{\ddagger}}(w\toe) =
%\min(\widehat{\sigma^{\ddagger}}(w\toe),
%\widehat{\tau^{\ddagger}}(w\toe)) =
%\laste.$

The case  $\psi^\dagger = \sigma \vee \tau$ is similar.
\end{proof}

%%%%%%%%%%%%%%%%%%%%%%%%%%%%%%%%%%%
\section{Main results}
\label{section:final}
%%%%%%%%%%%%%%%%%%%%%%%%%%%%%%%%%%%%%%
The incorporation
  into  \L$_\infty$-formulas
   of the numerical parameters
   of the Stable Consequence problem
  relies on the following:

\begin{proposition}
\label{proposition:odot}
For  $\Phi=\{\phi_1,\dots,\phi_u\}$   a finite set of
boolean formulas
in the variables $X_1\dots,X_n,$  let
the integers  $d$ and $e$ satisfy the conditions
  $0\leq d < u$  and $e \geq\max(2,d)$.
Then
the following conditions are equivalent:

\medskip
\begin{itemize}
\item[(i)] Every subset $\Phi'$ of $\Phi$ obtained by deleting
$d$ elements of $\Phi$ is unsatisfiable.

\medskip
\item[(i')] Every subset $\Phi'$ of $\Phi$ obtained by deleting
up to $d$ elements of $\Phi$ is unsatisfiable.

\medskip
\item[(ii)]  For each   valuation $V\in \Val_n$ such that
$V(X_i)\in  \left\{\frac{1}{e+1},\frac{e}{e+1}\right\}$ for all
$i=1,\ldots, n,\,\,\,$ we have
$V
\left(\left(\bigodot_{j=1}^{u}\phi_{j}^{\ddagger}\right)
\to (X_1\vee\neg X_1)^{d+1}\right)=1.$
%(Notation of (\ref{equation:iterated})).
\end{itemize}
\end{proposition}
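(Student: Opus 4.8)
The plan is to separate the purely combinatorial equivalence (i)$\Leftrightarrow$(i') from the genuine content, which is (i)$\Leftrightarrow$(ii), and to obtain the latter by evaluating the displayed implication pointwise on the relevant valuations. For (i)$\Leftrightarrow$(i'), the direction (i')$\Rightarrow$(i) is immediate, since a set obtained by deleting exactly $d$ formulas is in particular obtained by deleting at most $d$ formulas. For (i)$\Rightarrow$(i') I would argue by monotonicity: if $\Phi''$ arises by deleting $d'\le d$ formulas, then because $d<u$ I can delete a further $d-d'$ formulas to reach a subset $\Phi'\subseteq\Phi''$ obtained by deleting exactly $d$ formulas; by (i) the set $\Phi'$ is unsatisfiable, hence so is its superset $\Phi''$.

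For the main equivalence I would fix a valuation $V\in\Val_n$ with $V(X_i)\in\{\firste,\laste\}$ for every $i$, and let $W$ be the boolean valuation with $W(X_i)=1$ iff $V(X_i)=\laste$, so that $V(X_i)=w_i\toe$ in the notation of Definition \ref{definition:transform}, where $w=W\restrict\{X_1,\dots,X_n\}$. Proposition \ref{proposition:2223} then tells me that, for each $j$, $\,V(\phi_j^\ddagger)=1$ when $W$ satisfies $\phi_j$ and $V(\phi_j^\ddagger)=\laste$ otherwise. Writing $k$ for the number of $\phi_j$ not satisfied by $W$, I note that the premises of value $1$ act as units for $\odot$, so by (\ref{equation:valuation1}) an easy induction collapses the product to the strong conjunction of the $k$ offending premises:
\[
V\!\left(\bigodot_{j=1}^{u}\phi_j^\ddagger\right)=\max\!\left(0,\;k\cdot\laste-(k-1)\right)=\max\!\left(0,\;\frac{e+1-k}{e+1}\right).
\]
Since $V(X_1\vee\neg X_1)=\max(V(X_1),1-V(X_1))=\laste$, the same rule gives $V\bigl((X_1\vee\neg X_1)^{d+1}\bigr)=\max\!\left(0,(d+1)\laste-d\right)=\frac{e-d}{e+1}$, the outer $\max$ being inactive precisely because $e\ge d$.

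Recalling that $V(\alpha\to\beta)=1$ iff $V(\beta)\ge V(\alpha)$, the displayed formula evaluates to $1$ under $V$ iff $\frac{e-d}{e+1}\ge\max(0,\frac{e+1-k}{e+1})$, and a short case analysis (according to whether $k\le d$ or $k\ge d+1$) shows that this holds iff $k\ge d+1$. Hence (ii) says precisely that every boolean valuation $W$ leaves at least $d+1$ of the $\phi_j$ unsatisfied, i.e. satisfies at most $u-d-1$ of them; this is in turn equivalent to the assertion that no $(u-d)$-element subset of $\Phi$ is satisfiable, which is exactly (i). I expect the main obstacle to be the book-keeping in the $\odot$-computation: one must check that satisfied premises drop out while each of the $k$ unsatisfied ones lowers the value by $\firste$, and handle the truncation at $0$. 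The hypothesis $e\ge\max(2,d)$ enters twice: $e\ge 2$ is what licenses the appeal to Proposition \ref{proposition:2223}, while $e\ge d$ is what keeps $\frac{e-d}{e+1}$ nonnegative and makes the final count come out to the threshold $d+1$.
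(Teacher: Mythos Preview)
Your argument is correct and follows the same route as the paper's proof: both use Proposition~\ref{proposition:2223} to pin each $V(\phi_j^{\ddagger})$ to $\{\laste,1\}$, both compute $V((X_1\vee\neg X_1)^{d+1})=\frac{e-d}{e+1}$, and both reduce the truth of the implication to a comparison between $d$ and the number of $\phi_j$ falsified by the associated boolean valuation. The only cosmetic difference is that the paper argues the two directions separately by contrapositive and invokes the fact that $V\bigl(\bigodot_j\phi_j^{\ddagger}\bigr)$ is an integer multiple of $\frac{1}{e+1}$, whereas you compute the closed form $\max\bigl(0,\frac{e+1-k}{e+1}\bigr)$ and handle both directions at once.
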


\begin{proof} (i)$\Leftrightarrow$(i') is trivial.
  (i') $\Rightarrow$ (ii)  Let $V$ be a counterexample to (ii).
Since for all $i=1,\ldots,n$,  $V(X_i)\in  \left\{\frac{1}{e+1},\frac{e}{e+1}\right\}$,
upon identifying  the restriction
$V\restrict \{X_1,\ldots,X_n\}$ with
the point
$$(V(X_1),\ldots,V(X_n))\in \cube$$
we can write
\begin{equation}
\label{equation:due-valutazioni}
V\restrict \{X_1,\ldots,X_n\}=(W\restrict\{X_1,\ldots,X_n\})\toe
\end{equation}
for
a unique boolean valuation
  $W$ of the set boolean formulas
  $\psi(X_1,\dots,X_n).$
Since (ii)  fails for $V$, by
definition of the implication connective in \L$_\infty$ we can write
$$
V\left(\bigodot_{j=1}^{u}\phi_{j}^{\ddagger}\right)>
V((X_1\vee\neg X_1)^{d+1}).
$$
{}From
$$
%\label{equation:estense1}
V(X_1\vee\neg X_1) = \max\left(\frac{1}{e+1},\frac{e}{e+1}\right)=
\frac{e}{e+1}
$$
we obtain by (\ref{equation:iterated}) 
and (\ref{equation:valuation1})
$$
%\label{equation:estense2}
V((X_1\vee\neg X_1)^{d+1})=1-\frac{d+1}{e+1},
$$
whence
\begin{equation}
\label{equation:estense3}
V\left(\bigodot_{j=1}^{u}\phi_{j}^{\ddagger}\right)>
1-\frac{d+1}{e+1}.
\end{equation}
Our assumption about $V$ is to the effect that
$V\left(\bigodot_{j=1}^{u}\phi_{j}^{\ddagger}\right)$
is an integer multiple of  $\frac{1}{e+1}$, whence by
(\ref{equation:estense3})  
\begin{equation}
\label{equation:estense4}
V\left(\bigodot_{j=1}^{u}\phi_{j}^{\ddagger}\right)
\geq 1-\frac{d}{e+1}\,,
\end{equation}
and by Definition \ref{definition:ddagger},
$$
V\left(\phi_{j}^{\ddagger}\right) \in \left\{\frac{e}{e+1},1\right\}, \mbox{ for all }
j=1,\ldots,u.
$$
Thus  by (\ref{equation:estense4}),   at
most $d$  among the formulas  $\phi_{1}^{\ddagger},
\dots,\phi_{u}^{\ddagger}$
are evaluated  to $\frac{e}{e+1}$ by $V$.
By (\ref{equation:due-valutazioni})
together with Propositions \ref{proposition:tutto}
and   \ref{proposition:2223},  at
most $d$  among the formulas  $\phi_{1},\ldots,
\phi_{u}$ are evaluated to 0 by $W$.
Thus, at least  $u-d$  are satisfied by $W$,
against assumption (i').

\medskip
(ii) $\Rightarrow$ (i)   If (i) fails then without loss of
generality we can assume the set  $\Phi'=\{\phi_{1},\dots,\phi_{u-d}\}$ to be
satisfiable by some boolean valuation $Y$. Let the point
$z = (Y(X_1),\ldots,Y(X_n))  \in \cantor$ be
(identified with)  the restriction of $Y$
to the set of variables  $\{X_1,\ldots,X_n\}$.
Let  $U\in \Val_n$ be the valuation  uniquely determined
by the stipulation
$$
U\restrict \{X_1,\ldots,X_n\}= z\toe.
$$
Then $U$  satisfies the hypothesis
of  (ii),
$$
U(X_i) \in \left\{\frac{1}{e+1},\frac{e}{e+1}\right\}  \mbox{ for all }
i=1,\ldots,n,
$$
whence by   (\ref{equation:iterated}) 
and (\ref{equation:valuation1}), 
$$U((X_1\vee\neg X_1)^{d+1})=1-\frac{d+1}{e+1}.$$
Since $Y$ satisfies $\Phi',$  from
Proposition \ref{proposition:2223}   we get
$$U\left(\bigodot_{j=1}^{u}\phi_{j}^{\ddagger}\right)\geq 1-\frac{d}{e+1}.$$
Thus, 
$$
U\left(\bigodot_{j=1}^{u}\phi_{j}^{\ddagger}\right)
%\geq 1-\frac{d}{e+1} 
>  1-\frac{d+1}{e+1} = U((X_1\vee\neg X_1)^{d+1}),
$$
and, by definition of the $\to$ connective, (ii) fails.
\end{proof}

% Let $\{\Phi_1,\dots,\Phi_k\}= \{\{\phi_{11},\dots,\phi_{1u(1)}\},
%\dots,\{\phi_{k1},\dots,\phi_{ku(k)}\}\}$ be a finite set of finite sets

\begin{theorem}
\label{theorem:final}  
Let $n$ and $k$ be integers $>0$.
For each
$i=1,\dots,k$ let
$\Phi_i=\{\phi_{i1},\phi_{i2},\dots,\phi_{iu(i)}\}$
be a finite set of  boolean formulas in the variables
$X_1,\dots,X_n$. Also let the integer $e_i$   satisfy
$0\leq e_i<u(i)$. Then the following conditions are equivalent:

  \medskip
\begin{itemize}
\item[(i)]   For any subset $\Phi'_i\subseteq \Phi_i$
having  $u(i)-e_i$ elements  $(i=1,\ldots,k)$, the
boolean formula  $\bigwedge_{i=1}^k \Phi'_i$ is unsatisfiable.

\medskip
\item[(ii)] In infinite-valued \luk\ logic \L$_\infty$,
letting  $e = \max(2,e_1,\dots,e_k)$ and recalling the  notation of
(\ref{equation:iterated})-(\ref{equation:iterated-disj}),
we have
$$
\bigwedge_{t=1}^n\left((X^e_t \leftrightarrow \neg X_t) \vee
(X_t \leftrightarrow \neg \,\,e\centerdot X_t)\right)
\,\vdash
\,\bigwedge_{i=1}^k\left(\left(\bigodot_{j=1}^{u(i)}\phi_{ij}^{\ddagger}\right)
\to (X_1\vee\neg X_1)^{e_i+1}\right).
$$
\end{itemize}
\end{theorem}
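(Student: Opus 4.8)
The plan is to read condition (ii) purely semantically and then break the idempotent conjunction on the right-hand side apart, so that the single consequence $\theta\vdash\phi$ splits into $k$ independent instances of Proposition \ref{proposition:odot}, one for each block $\Phi_i$. No new computation beyond Propositions \ref{proposition:due-sei} and \ref{proposition:odot} should be required; the entire argument is a translation followed by a $k$-fold application of a result already in hand.

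First I would invoke Proposition \ref{proposition:woj}, whose clause (i) replaces the syntactic assertion $\theta\vdash\phi$ by its semantic form: $\phi$ is an \L$_\infty$-consequence of $\theta$ iff every valuation $V\in\Val_n$ satisfying $\theta$ also satisfies $\phi$. Since $\theta=\bigwedge_{t=1}^n\big((X^e_t\leftrightarrow\neg X_t)\vee(X_t\leftrightarrow\neg\,e\centerdot X_t)\big)$ is an idempotent (minimum) conjunction, $V$ satisfies $\theta$ iff each conjunct is satisfied, and by Proposition \ref{proposition:due-sei} this holds precisely when $V(X_t)\in\{\firste,\laste\}$ for every $t=1,\dots,n$. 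Thus the models of $\theta$ are exactly the valuations ranged over in Proposition \ref{proposition:odot}(ii), which is the observation that lets me feed that proposition.

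The decisive step is to use that the right-hand formula is itself an idempotent conjunction $\phi=\bigwedge_{i=1}^k\phi_i$, where $\phi_i=\big(\bigodot_{j=1}^{u(i)}\phi_{ij}^{\ddagger}\big)\to(X_1\vee\neg X_1)^{e_i+1}$. Because $V(\phi)=\min_{1\le i\le k}V(\phi_i)$, any such $V$ satisfies $\phi$ iff it satisfies every $\phi_i$. Consequently $\theta\vdash\phi$ holds iff, for each $i=1,\dots,k$, every valuation $V$ with $V(X_t)\in\{\firste,\laste\}$ (all $t$) satisfies $\phi_i$; that is, iff clause (ii) of Proposition \ref{proposition:odot} holds simultaneously for all $k$ blocks.

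It then remains to apply Proposition \ref{proposition:odot} to each $\Phi_i$ with $d=e_i$ and the single parameter $e=\max(2,e_1,\dots,e_k)$. The one point needing care is that this common $e$ must satisfy the hypothesis $e\ge\max(2,d)$ for every block at once, which is exactly why $e$ is taken to be the maximum of $2$ and all the $e_i$; together with the standing assumption $0\le e_i<u(i)$ this licenses the $i$-th application. Proposition \ref{proposition:odot} then equates, for each $i$, the statement ``every admissible $V$ satisfies $\phi_i$'' with the unsatisfiability of every subset of $\Phi_i$ obtained by deleting $e_i$ formulas, i.e.\ of every $(u(i)-e_i)$-element $\Phi'_i\subseteq\Phi_i$. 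Conjoining these $k$ block-wise equivalences yields exactly the equivalence of (i) and (ii). I expect the remaining work to be pure bookkeeping: verifying the quantifier interchange across the idempotent conjunction and confirming that the hypotheses of Proposition \ref{proposition:odot} transfer uniformly to each block under the chosen $e$.
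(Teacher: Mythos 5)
Your reduction of (ii) to $k$ blockwise applications of Proposition \ref{proposition:odot} is faithfully executed, and it is in fact exactly the plan behind the paper's one-line proof: the semantic reading via Proposition \ref{proposition:woj}(i), the characterization of the models of the left-hand formula via Proposition \ref{proposition:due-sei}, the interchange of ``for all $V$'' with ``for all $i$'' across the idempotent conjunction, and the check that $e=\max(2,e_1,\dots,e_k)\geq\max(2,e_i)$ licenses each application of Proposition \ref{proposition:odot} are all correct. The genuine gap is your final sentence. Writing $\psi_i$ for the $i$-th conjunct on the right of $\vdash$, your decomposition shows that (ii) is equivalent to: \emph{for each $i$, every $(u(i)-e_i)$-element subset of $\Phi_i$ is by itself unsatisfiable}, i.e.\ for every boolean valuation $W$ and every $i$, $W$ falsifies at least $e_i+1$ formulas of $\Phi_i$. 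But condition (i) asserts the joint unsatisfiability of the single formula $\bigwedge_{i=1}^k\Phi'_i$ across all blocks simultaneously, which amounts to: for every $W$ there \emph{exists} some $i$ such that $W$ falsifies at least $e_i+1$ formulas of $\Phi_i$. The blockwise condition is strictly stronger, so your argument establishes (ii)$\Rightarrow$(i) but not (i)$\Rightarrow$(ii); ``conjoining the $k$ blockwise equivalences'' does not yield (i).

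Concretely, take $n=1$, $k=2$, $\Phi_1=\{X_1\}$, $\Phi_2=\{\neg X_1\}$, $e_1=e_2=0$, so $e=2$. Condition (i) holds, since the only choice is $\Phi'_1=\{X_1\}$, $\Phi'_2=\{\neg X_1\}$ and $X_1\wedge\neg X_1$ is unsatisfiable. Yet the valuation with $V(X_1)=\laste=\frac{2}{3}$ satisfies the left-hand formula while giving the first conjunct $\bigl(\neg X_1\vee(X_1\oplus X_1)\bigr)\to(X_1\vee\neg X_1)$ the value $\min\left(1,\,1-1+\frac{2}{3}\right)=\frac{2}{3}<1$, so (ii) fails. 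To be fair, this defect is not one you introduced: the paper's proof (``Immediate from Propositions \ref{proposition:woj} and \ref{proposition:odot}\dots'') makes the same silent blockwise identification, and Proposition \ref{proposition:odot} itself treats only a single set $\Phi$. The mismatch disappears only if either the outer $\bigwedge_{i=1}^k$ on the right of $\vdash$ is replaced by the idempotent disjunction $\bigvee_{i=1}^k$ --- at the admissible valuations one has $V(\psi_i)=1$ iff the corresponding $W$ falsifies at least $e_i+1$ formulas of $\Phi_i$, so $\theta\vdash\bigvee_{i=1}^k\psi_i$ captures exactly the ``for all $W$ there exists $i$'' form of (i) --- or condition (i) is reinterpreted per block, which however would no longer match the Stable Consequence problem the theorem is meant to encode. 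As written, your concluding identification, and with it the (i)$\Rightarrow$(ii) direction, fails for every $k\geq 2$.
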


\medskip
\begin{proof}
Immediate from  Propositions
 \ref{proposition:woj} and
   \ref{proposition:odot}, using the
characterization in
   Proposition
\ref{proposition:due-sei}
of all  valuations satisfying
$
\bigwedge_{t=1}^n\left((X^e_t \leftrightarrow \neg X_t) \vee
(X_t \leftrightarrow \neg \,\,e\centerdot X_t)\right).
$
\end{proof}

A problem $\mathcal Q$
is said to be in coNP if its complementary
problem is in NP.  If,  in addition,
every problem in coNP is reducible
to $\mathcal Q$  in polynomial time, then
$\mathcal Q$  is {\it coNP-complete.}

\begin{corollary}
\label{corollary:final}
Fix  integers  $n, k>0$.
\begin{itemize}
\item[(i)]  For any instance
$$
I=
\left(\{\phi_{11},\dots,\phi_{1u(1)}\},
\dots,\{\phi_{k1},\dots,\phi_{ku(k)}\};  \,e_1,\ldots,e_k \right)
$$
of the Stable Consequence problem
in the variables $X_1,\ldots,X_n$, let
   $\rho(I)$ be the pair of \L$_\infty$-formulas
  $$
  \left(\bigwedge_{t=1}^n\left((X^e_t \leftrightarrow \neg X_t) \vee
(X_t \leftrightarrow \neg \,\,e\centerdot X_t)\right)
,\,\, \bigwedge_{i=1}^k\left(\left(\bigodot_{j=1}^{u(i)}\phi_{ij}^{\ddagger}\right)
\to (X_1\vee\neg X_1)^{e_i+1}\right)\right),
  $$
  where $e=\max(2,e_1,\ldots,e_k).$
Then $\rho$ reduces in polynomial time the
  Stable Consequence problem to the
  \L$_\infty$-consequence problem.

  \medskip
\item[(ii)] 
There is a  constant
$c$ such that
%letting  $v_I$  be the number of distinct variables in $I$, 
%the length
%   $|\rho(I)|$
%  of  $\rho(I)$   satisfies the inequalities
  $$
  |\rho(I)| \leq c \cdot n \cdot |I|
  < c\cdot |I|^2 \quad  \mbox{\it for all $n$  and }  I.
  $$

\medskip
\item[(iii)]  The Stable Consequence problem is
coNP-complete.
\end{itemize}
\end{corollary}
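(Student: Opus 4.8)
The plan is to take the three clauses in turn, the semantic content of the reduction having already been supplied by Theorem~\ref{theorem:final}.

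For (i), correctness of $\rho$ is nothing more than the equivalence (i)$\Leftrightarrow$(ii) of Theorem~\ref{theorem:final}: its condition (i) is verbatim the \textsc{yes}-instances of the Stable Consequence problem (deleting $e_i$ formulas from $\Phi_i$ is the same as retaining $u(i)-e_i$, and the ``arbitrarily'', read as a decision question, means ``for every choice''), while its condition (ii) is exactly $\theta\vdash\phi$ for the pair $\rho(I)=(\theta,\phi)$. Hence $I$ is a \textsc{yes}-instance iff $\phi$ is an \L$_\infty$-consequence of $\theta$. It then remains only to see that $\rho$ is computable in polynomial time; as the construction (negation normal form, the $\ddagger$-transform, and the assembly of the two formulas) is plainly effective, this reduces to the length estimate in (ii).

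For (ii), I would bound $|\theta|$ and $|\phi|$ separately. The formula $\phi=\bigwedge_i((\bigodot_j\phi_{ij}^{\ddagger})\to(X_1\vee\neg X_1)^{e_i+1})$ is cheap: by Definition~\ref{definition:ddagger} the passage $\phi_{ij}\mapsto\phi_{ij}^{\ddagger}$ first takes negation normal form (same number of variable occurrences, no increase in the number of connectives) and then replaces each literal by a fixed gadget, so $|\phi_{ij}^{\ddagger}|\le c_0|\phi_{ij}|$; the iterated conjunction over $j$ and the factor $(X_1\vee\neg X_1)^{e_i+1}$ add $O(u(i))$ and $O(e_i)$ symbols, and since $\sum_i u(i)$ and $\sum_i e_i$ are each at most the number of formula-occurrences in $I$, one gets $|\phi|=O(|I|)$. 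The formula $\theta$ carries the only genuine blow-up: it has $n$ conjuncts, the $t$-th being a Boolean combination of the $e$-fold iterated forms $X_t^e$ and $e\centerdot X_t$ in the single variable $X_t$, hence of $O(e)$ symbol occurrences, so $|\theta|=O(ne)$. As $e=\max(2,e_1,\dots,e_k)\le|I|$ this gives $|\theta|=O(n|I|)$, whence $|\rho(I)|=O(n|I|)$, and $n\le|I|$ yields the stated $|\rho(I)|<c|I|^2$. The fiddly point is precisely this bookkeeping --- keeping the $\ddagger$-transform and the iterated connectives within a constant factor, correctly accounting for the length of the variable names, and charging the parameter $e$ and the variable-count $n$ against $|I|$ --- and it is where I would spend the most care.

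For (iii), coNP-completeness splits into membership and hardness, and neither needs $\rho$. Membership: the complementary problem asks whether \emph{some} deletion makes the conjunction satisfiable, and a nondeterministic machine may guess both the retained sets $\Phi'_i\subseteq\Phi_i$ of the prescribed cardinalities and a Boolean assignment to $X_1,\dots,X_n$, then verify in polynomial time that the assignment satisfies every retained formula; so the complement lies in NP and the problem lies in coNP. Hardness: in the special case $k=1$, $\Phi_1=\{\omega\}$, $e_1=0$ one has $\Phi'_1=\Phi_1$, and the instance is a \textsc{yes}-instance iff $\omega$ is unsatisfiable, so the coNP-complete problem \textsc{unsat} reduces to the Stable Consequence problem in linear time. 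Together these give coNP-completeness. (The reduction $\rho$ of (i)--(ii) is what one then uses, in the companion corollary, to transport this hardness to the \L$_\infty$-consequence problem, whose membership in coNP follows from Proposition~\ref{proposition:woj}.)
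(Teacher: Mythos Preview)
Your proof is correct and follows essentially the same route as the paper: Theorem~\ref{theorem:final} for correctness of $\rho$, direct length bookkeeping for (ii) (you supply more detail than the paper's ``direct inspection'', but to the same effect), and the guess-and-check argument plus the \textsc{unsat} special case $k=1$, $e_1=0$ for (iii).

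One small correction to your closing parenthetical: coNP-membership of the \L$_\infty$-consequence problem does \emph{not} follow from Proposition~\ref{proposition:woj} alone; that proposition only establishes the equivalence of semantic and syntactic consequence, and the paper's Corollary~\ref{corollary:remake} requires a separate argument (guessing a short rational point $x$ with $\hat\theta(x)=1$, $\hat\phi(x)<1$, using a Hadamard-type bound on the denominators) to place the problem in coNP.
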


\begin{proof}
(i)
By Theorem \ref{theorem:final},
  $\rho(I)$ belongs to the \L$_\infty$-consequence problem iff
  $I$ belongs to the  Stable Consequence problem.
  Trivially,  $\rho$  is computable in polynomial time.

  (ii) These inequalities follow by direct inspection
  of the two formulas in (i).
  With reference to the notational conventions
  (\ref{equation:iterated})-(\ref{equation:iterated-disj}), 
  it should be noted that we do not have in
   \L$_\infty$ an exponentiation connective
   for $\psi^e$, nor a multiplication
   connective for $e\centerdot \psi$
   making $|\rho(I)|$ proportional to $|I|$.

\medskip
(iii)  In order to show that
  an instance
  $$I=(\Phi_1,\dots,\Phi_k;\,\,e_1,\ldots,e_k)$$
   does {\it not}
  belong to the Stable Consequence problem,
  for each $i=1,\ldots,k$
  one must guess a set  $\Delta_i\subseteq \Phi_i $ 
  with  $e_i$ elements, and a boolean valuation that
  simultaneously satisfies the conjunction of all formulas in    
  $(\Phi_1\setminus \Delta_1)\cup \cdots \cup (\Phi_k\setminus \Delta_k).$
Thus the  Stable Consequence problem
  is in coNP.

The desired coNP-completeness result now
easily follows, since the
  the Stable Consequence problem
  contains  the Unsatisfiability  problem---the prototypical
  coNP-complete problem.   Instances $I$ of
  the Unsatisfiability problem are those
  with $k=1$  and $e_1=0$.
   \end{proof}

\begin{corollary}
\label{corollary:remake}
  The  \L$_\infty$-consequence problem
  is co-NP complete.
\end{corollary}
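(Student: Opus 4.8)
The plan is to establish the two defining properties of coNP-completeness for the \L$_\infty$-consequence problem: membership in coNP, and coNP-hardness.

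The coNP-hardness half is immediate from the work already done. By Corollary \ref{corollary:final}(iii) the Stable Consequence problem is coNP-complete, so by definition every problem in coNP reduces to it in polynomial time. By Corollary \ref{corollary:final}(i) the map $\rho$ constructed in Theorem \ref{theorem:final} is a polynomial-time reduction of the Stable Consequence problem to the \L$_\infty$-consequence problem. Since polynomial-time reductions compose, every problem in coNP reduces in polynomial time to the \L$_\infty$-consequence problem, which is exactly coNP-hardness. This is the portion for which the present paper supplies a fully explicit, self-contained argument, built from the $\ddagger$-transform and the formulas of Proposition \ref{proposition:due-sei}.

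For membership in coNP I would show that the complementary problem is in NP. By Proposition \ref{proposition:woj}, $\phi$ fails to be an \L$_\infty$-consequence of $\theta$ exactly when some valuation $V\in \mathsf{VAL}_n$ satisfies $V(\theta)=1$ while $V(\phi)<1$; passing to the associated functions of Proposition \ref{proposition:tutto}, this amounts to the existence of a point $x\in[0,1]^n$ with $\widehat{\theta}(x)=1$ and $\widehat{\phi}(x)<1$. The functions $\widehat{\theta}$ and $\widehat{\phi}$ are continuous and piecewise linear with integer coefficients, so the set of such witnesses, when nonempty, is a nonempty rational polyhedral region and therefore contains a rational point whose coordinates can be written using a number of bits polynomial in $|\theta|+|\phi|$. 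A nondeterministic machine guesses such a point, evaluates $\widehat{\theta}$ and $\widehat{\phi}$ there in polynomial time using the inductive clauses of Proposition \ref{proposition:tutto}, and accepts precisely when both inequalities hold.

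The step I expect to be the main obstacle is the polynomial bit-size bound on the rational witness in the membership argument: it rests on the piecewise-linear, integer-coefficient (McNaughton) structure of \L$_\infty$-formulas and on controlling the size of the linear pieces in terms of the formula length. Granting this bound, the two halves combine at once: the \L$_\infty$-consequence problem lies in coNP and is coNP-hard, hence coNP-complete.
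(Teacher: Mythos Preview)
Your proposal is correct and follows essentially the same route as the paper: coNP-hardness via Corollary~\ref{corollary:final}, and membership in coNP by guessing a short rational witness $x$ with $\widehat{\theta}(x)=1$ and $\widehat{\phi}(x)<1$ and verifying it bottom-up. The step you flag as the main obstacle is exactly what the paper works out in detail: it bounds the integer coefficients of the linear pieces of $\widehat{\theta},\widehat{\phi}$ by the number of connectives, observes that a witness can be taken at a vertex of the induced polyhedral complex, and then applies Hadamard's inequality to the resulting $n\times n$ linear system to obtain the polynomial bit bound on the denominator.
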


\begin{proof}
In the light of  Corollary
   \ref{corollary:final}
there remains to be proved that
  the  \L$_\infty$-consequence problem  is in coNP.  So let
$(\theta,\phi)$  be an instance of
   the  \L$_\infty$-consequence problem,
   for some  $\phi,\theta\in \Form_n$.
   Let $L=\{l_1,\ldots,l_u\}$  be a set
   containing  the linear pieces of  $\hat\phi.$
   $L$ can be easily obtained by induction on the
   number $j'$ of connectives  occurring in $\phi$.
   The same induction shows that
   the maximum $a'$ of the absolute values of the coefficients
   of $l_1,\ldots,l_u$ satisfies the inequality
     $a'\leq j'+1$
   (actually, negation connectives have no
   effect on the value of  $a$).
   Let similarly  $M=\{l_{u+1},l_{u+2},\ldots,l_v\}$  be a  set
   containing  the linear pieces of  $\hat\theta.$
   Letting  $j''$  be the number of
connectives  in $\theta$,
   the absolute value $a''$  of the coefficients of
   all  $l_i\in M$ is bounded by   $j''+1$.
   Denoting by $a$ the maximum of the absolute values
   of the coefficients of every  $l_i\in L \cup M,$
   we can write
   \begin{equation}
   \label{equation:bound}
   a\leq j'+j'' < |\theta|+|\phi|.
   \end{equation}

   For each permutation $\phi$  of the index set
   $\{1,\ldots,v\}$ we have a (possibly empty)
   compact convex polyhedron
   $$
   P_\pi=\{x\in \cube\mid l_{\pi(1)}\leq l_{\pi(1)}\leq \cdots\leq 
l_{\pi(v)} \}.
   $$
   By construction, both $\hat\theta$ and $\hat\phi$  are
   linear over $P_\pi.$
   Now letting $\pi$ range over all possible permutations
   of   $\{1,\ldots,v\}$, the family of $P_\pi$ and their faces
   will constitute what is known as a {\it polyhedral complex}
   on $\cube.$  In other words, the union of the $P_\pi$
   is  $\cube$,  and any two polyhedra intersect in a common face.

 By Propositions \ref{proposition:woj} and
  \ref{proposition:tutto},       $\theta\nvdash\phi$
  iff
  $\hat\phi$  does not constantly take value 1 over
   $\hat\theta^{-1}(1)$ iff 
there is  a permutation $\pi$ and  a vertex $x$ of $P_\pi$
   such that   $\hat\theta(x)=1$ and $\hat\phi(x)<1.$
Such  $x$  is a rational point
$$
x=(a_1/b,\dots,a_n/b),\,\,\,a_i,b\in \mathbb Z,
\,\,\, 0\leq a_i\leq b\not=0
$$
  obtained by intersecting $n+1$
  linear functions
$l_i\in L \cup M$.  In other words, for the calculation of $x$
one must solve
  a system of $n$ linear
equations in the $n$ unknowns  $x_1,\ldots,x_n,$
where the coefficients of each equation
are integers   $\leq a$ as in   (\ref{equation:bound}).
Then
a  routine  computation using
Hadamard inequality  shows that the denominator
  $b$  of $x$ satisfies the inequality
$$
b < 2^{p(|(\theta, \phi)|)}
$$
for some fixed polynomial $p$, independent of the pair
$(\theta,\phi)$.
Writing now each coordinate $a_i/b$ of
  $x$ as a pair of
integers in decimal, or in binary notation,
we conclude that the length of $x$
  is bounded
by $q(|(\theta,\phi)|)$, for some polynomial $q$,
also independent of $(\theta,\phi)$.

Summing up, the following is a
  non-deterministic polynomial time
  decision procedure for   $\theta
  \nvdash\phi$:
\begin{itemize}
\item[---] Guess such short rational $x\in \cube$
and, proceeding bottom-up
throughout the parsing trees
of $\theta$ and $\phi$,
\item[---]  Quickly verify that $\hat\theta(x)=1$ and $\hat\phi(x)<1.$
\end{itemize}
We have thus proved  that
the  \L$_\infty$-consequence problem
is
in coNP, as required to complete the proof.
\end{proof}

%%%%%%%%%%%%%%%%%%%%%%%%%%%%%%
\section{Concluding remarks}
\label{section:remarks}
%%%%%%%%%%%%%%%%%%%%%%%%%%%%%%
Suppose the evidence at our disposal
to draw a certain conclusion $\omega$
in boolean logic
rests upon a  set
$\Theta=\{\phi_1,\ldots,\phi_{m}\}$
of  boolean formulas.
Suppose some formulas in $\Theta$  are
dubious, but  removal  of
the set $\nabla\subseteq \Theta$ of all dubious
formulas  
would dash  all hopes
to derive $\omega$  from $\Theta\setminus \nabla$.
Then $\Theta$ must be looked at
with the keener eyesight provided by
infinite-valued \luk\ logic.

Letting $\Delta=\Theta\setminus \nabla$,
any integer
$e=0,\ldots,{\rm card}(\nabla)-1,$
determines an instance
$$J_e =(\Delta\cup\{\neg\omega\},\nabla;\,\, 0,e)$$
  of the Stable Consequence problem,
together with its associated pair
$\rho(J_e)=(\theta_e,\phi_e)$
of \L$_\infty$-formulas.
While   $e$
measures no individual property of formulas in  $\nabla,$
    it makes perfect
   sense to ask whether
$\omega$  invariably 
follows  from  $\Delta\cup \nabla',$
for each set
$\nabla'\subseteq \nabla$ obtained by randomly expunging
up to $e$   formulas of $\nabla$.
By Theorem \ref{theorem:final},  the condition
$\theta_e\vdash \phi_e$ 
holds iff a fraction 
$$0\leq \frac{e}{\mbox{card}(\nabla)}<1$$
of dubious formulas can be randomly
removed  without prejudice
to our deduction of $\omega$ from $\Delta$
and the rest of $\nabla$ in boolean logic.

Generalizing the Maximum Satisfiability problem,
let $e^*$  be the largest integer $e$
such that $\theta_e\vdash \phi_e$.
If we strongly doubt about $\nabla$
then $\frac{e^*}{\mbox{card}(\nabla)}$ should
be close to 1, meaning that $\omega$  can be safely
obtained even if we randomly dismiss most formulas of
$\nabla$.  On the other hand,
  when the formulas in $\nabla$
are almost as sound as those in $\Delta$,
  we can afford a small value of
$\frac{e^*}{\mbox{card}(\nabla)}$,
telling us that
almost all formulas in $\nabla$ are necessary
to draw $\omega.$

Binary search yields such $e^*$ after  checking
   $\theta_e\vdash \phi_e$ for only
O$(\log_2({\rm card}(\nabla)))$
different values of $e$.
Any such instance of
the \L$_\infty$-consequence problem
translates into purely logical terms
  the imprecisely stated
problem whether the deduction of $\omega$
in boolean logic
essentially, inessentially, substantially,
marginally, critically,  \ldots relies on $\nabla.$

\end{document}